\documentclass[aps,pra,twocolumn,showpacs,preprintnumbers,nofootinbib]{revtex4-1}

\usepackage{amsmath}        
\usepackage{amsthm}
\usepackage{amsfonts}
\usepackage{hyperref}
\usepackage{graphicx}

\newcommand{\nn}{\nonumber}

\bibliographystyle{apsrev}

\newtheorem{theorem}{Theorem}
\newtheorem{lemma}{Lemma}

\begin{document}

\title{Continuous error correction for Ising anyons}
\author{Adrian Hutter and James R. Wootton}
\affiliation{Department of Physics, University of Basel, Klingelbergstrasse 82, CH-4056 Basel, Switzerland}

\date{\today}

\begin{abstract}
 Quantum gates in topological quantum computation are performed by braiding non-Abelian anyons. These braiding processes can presumably be performed with very low error rates. 
 However, to make a topological quantum computation architecture truly scalable, even rare errors need to be corrected.
 Error correction for non-Abelian anyons is complicated by the fact that it needs to be performed on a continuous basis and further errors may occur while we are correcting existing ones.
 Here, we provide the first study of this problem and prove its feasibility, establishing non-Abelian anyons as a viable platform for scalable quantum computation.
 We thereby focus on Ising anyons as the most prominent example of non-Abelian anyons and show that for these a finite error rate can indeed be corrected continuously.
 There is a threshold error rate $p_c>0$ such that for all error rates $p<p_c$ the probability of a logical error per time-step can be made exponentially small in the distance of a logical qubit.
\end{abstract}

\maketitle

\section{Introduction}

Besides revealing spectacular features of quantum physics, non-Abelian anyons are sought for their potential application in topological quantum computing \cite{kitaev2003,freedman_tqc,nayak_rev,pachos,stern}.
Ising anyons are the most well-studied non-Abelian anyon model, since they describe the exchange statistics of localized Majorana fermions \cite{pachos} and are expected as elementary excitations of the $\nu=\frac{5}{2}$ fractional quantum Hall state \cite{moore,wen}.
A variety of condensed-matter systems have been proposed as potential hosts of Majorana fermions, see Refs.~\cite{alicea_rev,leijnse,beenakker,dassarma} for reviews. 
Ising anyons also appear as excitations \cite{kitaev2006,levin,kapit,palumbo} or ends of defect lines \cite{bombin_ising,you,petrova,wootton2015} in several spin-lattice models.

The set of quantum gates that can be performed topologically, i.e.\ by braiding anyons, depends on how qubits are encoded into the fusion space of a number of Ising anyons.
However, the gate set will not allow for universal quantum computation for any encoding \cite{ahlbrecht}.
In order to perform a universal quantum computation by use of Ising anyons, some gates need to be performed in a non-topological way \cite{bravyi,freedman,bonderson_prl,bonderson}. 
Assuming that all topological operations are error-free, these non-topological operations have a very high error threshold \cite{bravyi}.

Here, we want to focus on the assumption of error-free topological operations. It is often said that topological operations are ``inherently fault-tolerant''. 
However, even a mass gap which is significantly higher than temperature will still lead to a finite density of accidental quasi-particle excitations, and these need to be corrected if a \emph{scalable} quantum computation architecture is to be built. 
The field of error correction for non-Abelian anyons is still relatively young.
Error correction algorithms for Ising anyons \cite{brell} and other non-Abelian anyon models \cite{wootton,hutter} have been benchmarked using Monte Carlo simulations.
Ref.~\cite{burton} demonstrated that even error correction for Fibonacci anyons can be simulated on a classical computer, despite them being universal for quantum computation.
Ref.~\cite{wootton_proof} provides a threshold proof for arbitrary anyon models and a wide class of decoders.

These references assume that we are able to detect all anyonic charges at some time, and then are able to fuse as many anyons as we like, without any further errors occurring.
This picture, however, is highly idealized. 
In reality, further errors may occur while we are correcting existing ones. 
This is both due to the finite time needed to move existing anyons towards each other, and due to the fact that we will in general need more than one round of fusion to get rid of all existing non-Abelian anyons.

For Abelian anyons models (such as the toric code \cite{dennis}), it is possible to record all measurements for some time and only correct a net error at the final time step. 
This is a possibility we do not have with non-Abelian anyons: they need to be corrected ``on the fly'', as the results of their non-Abelian braiding would be impossible to unwind later on. 
Ref.~\cite{pedrocchi} recently pointed out that even performing error correction after the completion of each non-Abelian braid is not sufficient, since the braiding procedure will turn local errors into non-local ones.

This article thus investigates the thus far unexplored problem of \emph{continuous error correction for non-Abelian models}, where we focus on Ising anyons as the most prominent example of non-Abelian anyon. 
We restrict our study to the most trivial topological gate, the identity -- i.e., on the task of preserving a topologically stored quantum state. 
It is generally assumed that the thresholds for quantum information processing are identical to those for quantum information storage.
Our main result is that a sufficiently low rate of errors can indeed be corrected continuously, allowing in principle to preserve a topologically stored quantum state indefinitely in a sufficiently large system.

Sec.~\ref{sec2} discusses continuous error correction for Ising
anyons and states our main result, the proof of which can
be found in Sec.~\ref{sec3}.

\section{Continuous error correction for Ising anyons}\label{sec2}

Following Ref.~\cite{brell}, we consider a square lattice of size $L\times L$ with periodic boundary conditions which hosts \emph{Ising anyons}. 
Anyons are quasi-particles that live in 2D space, and this space needs to be discretized to allow for measurements of the local anyonic charge.
Our setting is thus the most generic choice that is agnostic with respect to the physical realization of the Ising anyons.
The chosen setting is also of interest from a coding-theoretic perspective, in that it provides an example of correcting a code without an explicit Pauli-matrix tensor product structure \cite{brell}.
For proposals to realize Majorana fermions in nanowire hybrid systems, which are experimentally most advanced \cite{mourik,das,deng,rokhinson,churchill,finck,nadj,pawlak}, other geometries may be more adequate \cite{alicea_natphys,pedrocchi}.

Each cell of the square lattice can carry either \emph{vacuum} $1$, a \emph{fermion} $\psi$, or a (non-Abelian) \emph{anyon} $\sigma$. These satisfy the 
fusion rules
\begin{align}\label{eq:fusion}
 \psi\times\psi=1\,,\quad\psi\times\sigma=\sigma\,,\quad\sigma\times\sigma=1+\psi\,.
\end{align}
All $L\times L$ charges are measured periodically at times $0, 1, 2, \ldots$. 
We assume that these measurements can be performed flawlessly.
This assumption means that we neglect the inevitable errors that arise during the complex measurements of non-Abelian anyons.
However, it is common to study this ideal case in quantum error correction to establish a foundation for later studies that include measurement noise.

We study the question of whether it is possible to preserve a quantum state stored in this system despite a constant rate of errors affecting it.
More specifically, we consider whether it is possible to preserve a certain state of the degenerate vacuum of the system. 
Transitions between different vacuum states can be induced by dragging fermions or anyons around the torus.
More realistically, one would consider storing a quantum state in the fusion space of a set of well-separated anyons \cite{brell}.
However, we focus on the task of preserving a certain vacuum state of the torus for simplicity, since at sufficiently large length-scales, the question of correctability is independent of the particular encoding scheme chosen.

Let us assume that between any two rounds of charge measurement a pair of fermions and a pair of anyons are created on each pair of adjacent cells of the lattice with probability $p$ each ($2p<1$). 
Error events can thus be associated with edges of the square lattice.
Note that the case of both a pair of anyons and a pair of fermions being created on the same edge is indistinguishable from only a pair of anyons being created.
We thus restrain from considering this case explicitly.
Hopping and braiding of anyons can emerge from pair creation processes.
While in reality different processes will have different rates, we can think of $p$ as the probability associated with the highest-rate process.
Finally, we assume that we are able to move an anyon or a fermion to an adjacent cell over the course of one measurement period. 

Our goal is to show that for sufficiently low rates $p$, it is possible to perform error correction (for any number of time-steps) such that the probability of failure \emph{per error correction period} is exponentially small in $L$.
Consequently, a given state of the anyonic vacuum can be preserved for a time that is exponentially long in $L$.

The basic idea behind our error correction approach is that 
it is always possible to first fuse all $\sigma$ anyons in pairs, and then fuse all $\psi$ fermions in pairs in order to obtain a vacuum state \cite{brell}.
Note that according to the second of the fusion rules in Eq.~(\ref{eq:fusion}), if a fermion and an anyon move to the same cell, they will fuse to an anyon.
If that anyon is further moved around, it will carry with it the additional fermionic charge. 
We will refer to this informally as the anyon ``swallowing'' the fermion.
The fermion will be recovered if the anyon is brought to fusion with an other anyon, as the fermionic parity is conserved.
We will thus continuously fuse the anyons in pairs, recovering any ``swallowed'' fermions.
Since the fermions are Abelian, error correction for them can be postponted: it suffices to correct a net error after some time.
This is similar to the idea of updating a Pauli frame instead of performing actual corrections in a surface code (see e.g.\ Ref.~\cite{jones}).
If we are moving two anyons towards each other in an attempt to fuse them, further errors may happen along their path that make one of them disappear.
We will shortly discuss how we deal with this.

In order to formally discuss error correction, we consider a $2+1$-dimensional cubic lattice, in which time flows ``upwards'', and with periodic boundary conditions in horizontal (spatial) direction.
Charge measurements correspond to horizontal faces. 
It will prove convenient to identify error events and paths along which we move anyons with edges of the \emph{dual} lattice of this cubic lattice.
Error events happen between consecutive rounds of charge measurement and affect two adjacent cells. They can thus naturally be identified with vertical faces of the primal lattice, and, in turn, horizontal edges of the dual lattice.
We will call the one-cell-per-time-step paths along which we move the anyons during error correction their \emph{world-lines}. 
Error events will not be counted as part of this world-line.
Horizontal faces of the primal lattice (charge measurements) are naturally identified with vertical edges of the dual lattice.
For every charge measurement which detects an anyon, we consider the associated vertical edge of the dual lattice to be part of the anyon's world-line.
If an anyon is moved to an adjacent cell, the horizontal edge connecting the old and new vertical edge is also considered to be part of the anyon's world-line.
Note that error events always correspond to horizontal edges of the dual lattice, while anyon world-lines include both horizontal (intentional movements) and vertical (charge measurements) edges.
Let $W$ denote the union of all world-lines.

Let $A$ denote the set of all anyonic (as opposed to fermionic) error events and $F$ the set of all fermionic error events. 
Both of these are subsets of the horizontal edges of the dual of the $2+1$-dimensional cubic lattice.
Let $\partial A$ denote the set of cells of the cubic lattice which have an \emph{odd} number of elements of $A$ incident upon them.
Elements in $\partial A$ correspond to \emph{unexpected changes in the anyonic charge}, i.e., those which are not due to us intentionally moving an anyon to an adjacent cell.
The sets $W$ and $\partial A$ are known to us with certainty. 

Fig.~\ref{table} summarizes the natural language definition and the geometrical interpretation of all symbols that are relevant for our proof.

\begin{figure*}

\begin{tabular}{|c|c|c|}
\hline
symbol & natural language definition & geometrical interpretation \\
\hline\hline
$W$ & anyon world-lines & horizontal and vertical edges of the dual lattice \\
$A$ & anyonic (as opposed to fermionic) errors &  horizontal edges of the dual lattice \\
$F$ & fermionic errors & horizontal edges of the dual lattice \\
$\partial A$ & unexpected changes in the anyonic charge & cells of the lattice \\
$A_t$ & anyon errors in time-step $t$ & horizontal edges of the dual lattice \\
$\partial A_t$ & unexpected changes in the anyonic charge in time-step $t$ & cells of the lattice \\
$H_t$ & hypothesis about anyonic errors in time-step $t$ &  horizontal edges of the dual lattice \\
$H$ & all hypotheses $H_t$ up to the present time &  horizontal edges of the dual lattice \\
$A^s$ & strings of anyon errors & horizontal edges of the dual lattice \\
$A^l$ & loops of anyon errors & horizontal edges of the dual lattice \\
$A^s_i$ & anyon error strings in connected component $i$ & horizontal edges of the dual lattice \\
$H_i$ & anyon error hypothesis in connected component $i$ & horizontal edges of the dual lattice \\
$W_i$ & anyon world-lines in connected component $i$ & horizontal and vertical edges of the dual lattice \\
$W_i^h$ & intentional anyon movements (horizontal elements of $W_i$) & horizontal edges of the dual lattice \\
$W_i^v$ & charge measurements detecting an anyon (vertical elements of $W_i$) & vertical edges of the dual lattice \\
$P$ & path looping around the torus & horizontal and vertical edges of the dual lattice \\
$O_i$ & loop formed by the disjoint union of $A_i^s$ and $W_i$ & horizontal and vertical edges of the dual lattice \\
$O$ & union of all loops $O_i$ & horizontal and vertical edges of the dual lattice \\
$D_i(P)$ & path $P$ deformed by taking the symmetric difference with loop $O_i$ & horizontal and vertical edges of the dual lattice \\
$P'$ & deformation of $P$ maximizing the number of $A$ events & horizontal and vertical edges of the dual lattice \\
\hline
\end{tabular}
\caption{Natural language definition and the geometrical interpretation of all symbols that are relevant for the proof.}
\label{table}
\end{figure*}

Let $A_t$ and $\partial A_t$ denote the subsets of $A$ and $\partial A$, respectively, that happen between charge measurements $t-1$ and $t$.
Since the fusion rules in Eq.~(\ref{eq:fusion}) preserve the parity of the number of anyons that exist at any given time, the sets $\partial A_t$ always have even cardinality. 
It is the task of a classical error correction algorithm to form a hypothesis about the set $A_t$ that is compatible with the given set $\partial A_t$.
This problem is exactly identical to the well-studied problem of finding the most likely error set in a toric code with bit-flip errors and perfect syndrome measurements \cite{dennis}.
We can thus employ the standard algorithm used to find such a pairing in the toric code case, namely an efficient minimum-weight perfect matching (MWPM) algorithm \cite{edmonds,austin}.
The weight of a path connecting two lattice cells is thereby given by the $2$-dimensional $L_1$-norm, i.e., the Manhattan distance in the $L\times L$ lattice with periodic boundary conditions.
We stress that despite the fact that we deal with anyon world-lines in $2+1$ dimensions, the algorithmic part of the error correction problem for the anyons is a $2$-dimensional one. This is in contrast to the error correction problem for the fermions.

Let us call a subset of the edges of the dual lattice a \emph{string} if there are exactly two cells of the cubic lattice which have exactly one of the edges incident upon them, and all other cells have either zero or two edges incident upon them. 
The MWPM algorithm will return strings that connect the elements in $\partial A_t$ in pairs.
The union of these strings, which we call $H_t$, forms our hypothesis about what anyonic errors have happened between charge measurements $t-1$ and $t$.
Let $H=\bigcup_tH_t$ denote the union of all hypotheses up to the present time.
We note that $A_t$ can in general not be decomposed into strings ending at elements of $\partial A_t$ -- it can contain \emph{loops}. 
These are defined as sets of edges of the dual lattice such that each cell of the cubic lattice has zero or two edges of the set incident upon them. 

Each element in $\partial A$ is connected by a string which is a subset of $H$ to a contemporaneous element of $\partial A$.
Furthermore, each element in $\partial A$ is the beginning or the end of an anyon world-line.
This world-line connects the element either to another element in $\partial A$, which has a different time-coordinate, or to a currently existing anyon.
Each currently existing anyon is thus connected through a chain consisting of strings which are alternately subsets of $W$ and $H$ to another currently existing anyon.
If two currently existing anyons are connected this way, we move them towards each other, one cell per time-step, along the shortest possible path which is homologically equivalent to the chain that connects them \cite{create_comment}.

The error correction problem for the fermions is much more involved than for the anyons
because fermions are not only created and moved by elements of $F$, but also by elements of $A$ and $W$.
Fig.~\ref{fig2} illustrates a process in which two fermions are created while correcting two pairs of anyons, and a process in which a fermion is ``swallowed'' during anyon error correction.
Both of these processes illustrate that it is possible to create pairs of fermions which do not appear in the same measurement period.
For this reason, the error correction problem for the fermions is $2+1$-dimensional.
We need to pair unexpected changes of the fermionic charge which may have different time-coordinates.
(Since before the final time-step we never attempt to move fermions, any change in fermionic charge is unexpected.)
Unexpected appearances of fermions are due to fermion error events, or due to fusion of two anyons. 
Unexpected disappearances are due to an anyon ``swallowing'' a fermion, or due to an appearance event at a location where a fermion has already been present.
When applying MWPM to the fermionic problem, the weight we assign to connecting two spatio-temporally separated events is the $2+1$-dimensional $L_1$-norm.
Clearly, a more sophisticated weight would take knowledge about anyon world-lines into account.
This would be similar to the idea of introducing ``shortcuts'' in Ref.~\cite{hutter}.
It may also help to weight spatial and temporal distances differently, and to take entropic contributions to the weight into account.
However, we restrict to the $L_1$-norm for simplicity.

\begin{figure}
\centering
\includegraphics[width=1.0\columnwidth]{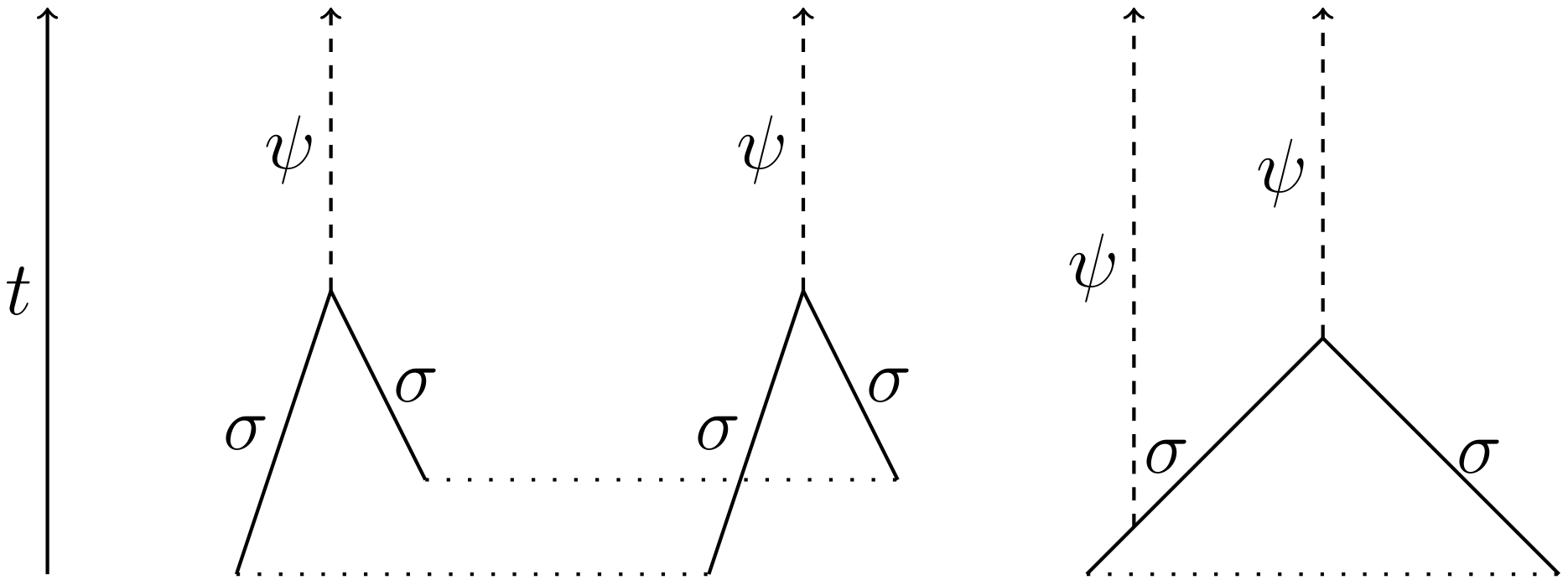}
\caption{Two possible processes illustrating how creation and fusion of $\sigma$ anyons can produce or ``swallow'' $\psi$ fermions. Anyonic errors are dotted, anyonic world-lines are solid, and fermionic world-lines are dashed. 
Left process: Two error strings produce two pairs of anyons, which are incorrectly paired and correspondingly brought to fusion. This process creates a pair of fermions with probability $\frac{1}{2}$.
Right process: An error string creates a pair of anyons which are brought to fusion. Along one of the anyonic world-lines, a pair of fermions is created and one of the two fermions is ``swallowed'' by the nearby anyon.
The second fermion is recovered only when the two anyons are fused.}
\label{fig2} 
\end{figure}

Our main result is the following theorem.

\begin{theorem}\label{thm}
If error correction is performed as described above, there is a finite threshold $p_c>0$ such that for $p<p_c$ the error rate per error correction period on the stored quantum information is exponentially small in $L$.
\end{theorem}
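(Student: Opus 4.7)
The plan is to reduce the statement to two independent failure events, one anyonic and one fermionic, and to bound each by a Peierls-type counting argument adapted to the continuous setting. A logical error in one correction period corresponds to the current MWPM hypothesis differing from the true error content by a topologically non-trivial cycle on the torus, separately in the anyonic and the fermionic sectors, so I would bound the probability that such a cycle appears by entropy-vs.-energy reasoning.

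For the anyonic sector, MWPM is performed independently in each time-step on the $2$D syndrome $\partial A_t$, so $A_t\triangle H_t$ is a disjoint union of cycles in the dual planar lattice. A logical error in step $t$ requires at least one of those cycles to be non-contractible, hence of length at least $L$, and because $H_t$ is minimum-weight, at least half of the cycle's edges must lie in $A_t$. Combining this with the bound $c^n$ on the number of dual cycles of length $n$ through a fixed reference edge, together with the independence of $A_t$ across edges, yields a per-step failure probability of order $L^2 (c\sqrt{p})^L$, which is exponentially small for $p$ below an explicit anyonic threshold.

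For the fermionic sector the difficulty is that fermions are produced or absorbed not only by $F$ events but also by anyonic pair creation (when $A$ errors eventually fuse) and by the intentional world-lines $W^h$ (``swallowing''). I would exploit the geometric structure summarized in Fig.~\ref{table}: decompose $A=A^s\cup A^l$ into strings and loops, split $A^s$ into connected components $A^s_i$, and note that $O_i:=A^s_i\sqcup W_i$ is a closed cycle in the $2+1$D dual lattice, since the endpoints of the error string coincide with the endpoints of its associated world-lines. Consequently, the symmetric difference $D_i(P):=P\triangle O_i$ preserves the homology class of a reference path $P$ winding around the torus. Letting $P'$ be the deformation maximizing $|P'\cap A|$, every edge of $P$ originally lying in $W_i$ alone can be traded for an edge in $A^s_i$, so the fermionic-parity discrepancy around $P$ is carried entirely by edges of $P'$ that belong to $F\cup A$.

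From there I would argue that a fermionic logical error in the current period requires $P'$ to contain $\Omega(L)$ edges of $F\cup A$, and conclude via a $2+1$D Peierls count. The main obstacle is precisely this last step: in $2+1$D the enumeration of candidate paths is much richer than in $2$D, the world-lines correlate the randomness between time-slices, and a fermion error created long ago may only become harmful now because an anyon has finally brought it into a bad position. To control this I would (i) bound $|W_i|$ by $O(|A^s_i|)$ using that each world-line is the shortest path matching a pair of anyonic defects produced by MWPM, (ii) restrict attention to deformations $P'$ whose temporal extent is itself $O(L)$, justified because the relevant events must be causally connected to the current putative failure, and (iii) verify that absorbing the world-line edges into the count of $A$-edges keeps the effective per-edge probability of order $p$, so that the $2+1$D entropy-energy comparison still delivers an exponentially small bound on the fermionic failure probability per correction period, completing the proof.
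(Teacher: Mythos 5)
Your proposal follows essentially the same route as the paper: the anyonic sector reduces to the standard 2D toric-code MWPM argument, and the fermionic sector is handled by decomposing $A$ into strings and loops, forming the closed loops $O_i=A_i^s\sqcup W_i$, bounding $|O_i|$ by a constant times $|A_i^s|$ (the paper's Lemma~\ref{lem1}), deforming the putative failure path by symmetric differences with the $O_i$ to maximize its overlap with $A$ (Lemma~\ref{lem2}), and closing with a $2+1$D Peierls count (the final lemma). The only deviation is your step (ii) restricting the temporal extent of $P'$, which is unnecessary: the paper simply sums the $5^{\ell}$-type path count over all lengths $\ell\geq L$ anchored at the current time-slice, so no causality argument is needed.
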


While we prove the existence of a finite lower bound for $p_c$, this lower bound is very small.
Nevertheless, it demonstrates the important fact that the threshold is non-zero, and hence that continuous error correction for non-Abelian anyons is possible in principle.
We expect the value for the lower bound to be very pessimistic, and so it should not be confused for an estimated value for $p_c$.

\section{Proof of the main result}\label{sec3}

Our proof is similar in nature to the proofs for the correctability of the toric code by means of MWPM \cite{dennis,fowler}.
As long as we only consider the (non-Abelian) anyons, the error model for them is exactly identical to the one for the toric code \cite{kitaev2003,dennis} with bit-flip rate $p$ and perfect syndrome measurements. 
The correctabiliy of the anyons thus follows from the correctability of this (very well-studied) error model.
Indeed, Ref.~\cite{dennis} contains an analytical proof that for this problem $p_c\geq3.7\%$.
 
Our main difficulty is correcting the fermions which may be produced or ``swallowed'' during the continuous correction of the anyons, leading to a more involved, correlated effective error model for the fermions.
Fig.~\ref{fig2} shows two examples of such processes.
From here on, we consider the \emph{hypothetical completion} of $W$. 
That is, we consider the \emph{hypothetical} world-lines $W$ that we would obtain if we could complete error correction in accordance with our hypothesis $H$ at a given time and bring all anyons to fusion, without any further errors occurring.
There is thus no longer a notion of ``currently existing anyons''. Each string in $W$ (anyon world-line) begins and ends at an element of $\partial A$ (unexpected change in anyonic charge).

The hypothetical completion of $W$ is introduced in order to study whether error correction has been successful up to a given time, and is used to avoid explicitly modeling a realistic fault-tolerant read-out step.
We stress that no assumption is made that we can actually complete error correction without any new errors occurring in reality.
Similarly, we assume that the system is initially free of anyonic defects to avoid explicitly modeling a fault-tolerant read-in step.
The idealized assumptions of an error-free initial state and a final fault-less period of error correction are standard in the study of fault-tolerant qubit-based quantum computation (see, e.g.\ Ref.~\cite{groszkowski}).
While read-in and read-out will have to be performed in a fault-tolerant way in reality, we do not consider these for simplicity and in order to keep our results independent from the particular encoding scheme.

We have remarked that the set $A$ can be decomposed into loops and strings which connect elements in $\partial A$ in pairs.
We choose this decomposition such that each element in $\partial A$ has exactly one string incident upon it.
Let $A=A^s\cup A^l$ be such a decomposition.  (Note that the decomposition is in general not unique.)
The sets $A^s$, $H$, and $W$ can then all be decomposed into strings, each of which ends at an element of $\partial A$. 
Conversely, each element of $\partial A$ has three strings incident upon it, which are respectively subsets of  $A^s$, $H$, and $W$.

Note that the sets $H$ and $W$ are not necessarily disjoint: it can happen that we attempt to move an anyon to an adjacent cell and, between the same two rounds of charge measurement, an anyonic error affecting the same two cells happens.
Let us thus study the disjoint union
\begin{align}\label{eq:graph}
&A^s\sqcup H\sqcup W = \nn\\&\quad \lbrace(e,a) : e\in A^s\rbrace \cup \lbrace(e,h) : e\in H\rbrace \cup \lbrace(e,w) : e\in W\rbrace\,.
\end{align}
Here, the index $i\in\lbrace a, h, w\rbrace$ in the ordered pair $(e,i)$ tells us which of the three sets $A^s$, $H$, or $W$ the edge $e$ belongs to.
If, for example, an edge is an element of both $H$ and $W$, there will thus be two corresponding edges in $A^s\sqcup H\sqcup W$.

The set $A^s\sqcup H\sqcup W$ forms a trivalent graph, with each vertex corresponding to an element of $\partial A$, and having an $A^s$, an $H$-, and a $W$-string incident upon it.
Let us study minimal connected components of this graph. 
Let $A_i^s$, $H_i$, and $W_i$  denote the union of all strings in  $A^s$, $H$, and $W$, respectively, that belong to connected component $i$.
Finally, let $W_i^h$ denote the set of horizontal edges 
(i.e., intentional anyon movements) 
and $W_i^v$ the set of vertical edges 
(i.e., charge measurements which detect an anyon) 
in $W_i$.
Recall that $A_i^s$ and $H_i$ consist of horizontal edges only.

From the way our error correction procedure is defined, we have the inequalities
\begin{align}\label{eq:inequalities}
 |W_i^h| \leq |H_i| \leq |A_i^s|\,.
\end{align}
The first inequality is due to us moving anyons along the shortest path which is homologically equivalent with $H_i$. 
We could always choose $|W_i^h|=|H_i|$ by undoing exactly the errors that happened according to our hypothesis.
The second inequality is due to using MWPM for error correction.
Assume by contradiction that $|H_i|>|A_i^s|$. Then, replacing $H_i$ with $A_i^s$ would yield a perfect matching of the unexpected changes in anyonic charge which is of lower weight than the one returned by the MWPM algorithm, which contradicts its definition.

Now consider the loops $O_i=A_i^s\sqcup W_i$.
The following lemma is the main technical tool that we use in order to deal with these.

\begin{lemma}\label{lem1}
All loops $O_i$ satisfy
 \begin{align}
  |O_i|\leq 4|A_i^s|\,.
 \end{align}
\end{lemma}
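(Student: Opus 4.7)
My plan is to reduce the claim to a per-string inequality $v_j - h_j \le 1$, where $v_j$ and $h_j$ denote the numbers of vertical and horizontal edges in the $j$-th $W_i$ string, and then combine this with the given inequalities~(\ref{eq:inequalities}). In the hypothetical completion each $W_i$ string is a simple path in the dual lattice of the $2{+}1$-dimensional cubic lattice: its interior cubes all have $W$-degree exactly two, and since $\sigma$ anyons are never split, such a path must be of one of two types. Either (A) it traces a single anyonic world-line monotonically upward in time from a birth at an appearance-type $\partial A$ cube to a death at a disappearance-type $\partial A$ cube, or (B) it consists of two anyonic world-lines joined at a single fusion peak, with both endpoints being appearance-type $\partial A$ cubes.

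The key step is a per-anyon count. An anyon detected at $L$ consecutive measurement times contributes exactly $L$ vertical edges of $W_i$ and one horizontal edge for each time interval in which it makes an intentional move. Because currently existing anyons always come in even numbers (errors modify the anyon count by two), each of them is paired to another via a $W$-$H$ chain, and the algorithm moves each anyon one cell per time step toward its partner along a shortest homologous path; this forces an intentional move in each of the $L-1$ intervals between consecutive detections. Only in its final interval (a fusion or an unexpected disappearance) may an anyon fail to move, so the per-anyon difference $v - h$ is at most $1$. In Case~(A) this already yields $v_j - h_j \le 1$. In Case~(B), two anyons at combined Manhattan distance $\ge 3$ cannot reach a common cell in a single time step, so they cannot both fail to move in the fusion interval, and $v_j - h_j \le 1$ once more.

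Let $k$ denote the number of $W_i$ strings in component $i$. Summing the per-string bound gives $|W_i^v| \le |W_i^h| + k$. Each of the $k$ strings making up $A_i^s$ has at least one edge, so $|A_i^s| \ge k$. Combining with $|W_i^h| \le |H_i| \le |A_i^s|$ from~(\ref{eq:inequalities}),
\begin{equation*}
|O_i| = |A_i^s| + |W_i^h| + |W_i^v| \le |A_i^s| + 2|W_i^h| + k \le 4|A_i^s|\,.
\end{equation*}
The main obstacle will be justifying the per-anyon movement claim rigorously — namely, that the chain construction always yields an even number of partnerable currently existing anyons and that the shortest-path movement rule makes every anyon leave its cell in every non-terminal interval of its lifetime — together with the geometric observation in Case~(B) ruling out a simultaneous double failure to move in the fusion interval.
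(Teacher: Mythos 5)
Your proof is correct and follows essentially the same route as the paper's: both rest on the observation that every detected anyon is moved off its cell in every time-step except possibly the last one before it vanishes, giving $|W_i^v|\leq|W_i^h|+(\text{slack})$ with slack at most $|A_i^s|$, and then combine this with $|W_i^h|\leq|A_i^s|$ from Eq.~(\ref{eq:inequalities}). The only (immaterial) bookkeeping difference is that the paper allots one non-move per fusion event and bounds the number of fusions $f_i$ by $|A_i^s|$ via ``each error creates at most two anyons'', whereas you allot one non-move per world-line string and bound the number of strings $k$ by $|A_i^s|$ via ``each $A_i^s$-string is nonempty''.
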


Note that for the simplest possible process, a single anyon error event that is immediately corrected ($|A_i^s|=1$, $|H_i|=1$, $|W_i^h|=1$, $|W_i^v|=2$, $|O_i|=4$), the bound is tight.

\begin{proof}
Let $f_i$ denote the number of fusion events of a pair of anyons in $O_i$.
An anyon needs to be moved away from each location at which it appears. However, before fusion two anyons may be adjacent so that we need to move only one of them. We thus have
\begin{align}\label{eq:move}
 |W_i^h|\geq |W_i^v|-f_i\,.
\end{align}
Furthermore, each error event can create at most two anyons, so 
\begin{align}\label{eq:create}
 f_i\leq |A_i^s|\,.
\end{align}
Recall from Eq.~(\ref{eq:inequalities}) that
\begin{align}\label{eq:inequality}
 |W_i^h|\leq|A_i^s|\,.
\end{align}
Combining the above inequalities, we find
\begin{align}\label{eq:combined}
 |W_i^v|\leq 2|A_i^s|\,.
\end{align}
For the total length of the loop, we find, using Eqs.~(\ref{eq:inequality}) and (\ref{eq:combined}),
\begin{align}
 |O_i| = |A_i^s|+|W_i^h|+|W_i^v| \leq 4|A_i^s|\,.
\end{align}
\end{proof}

The following lemma provides a necessary condition for the failure of error correction.

\begin{lemma}\label{lem2}
 A failure of error correction requires a homologically non-trivial closed path $P$ (a loop) satisfying 
\begin{align}\label{eq:cond}
 7|P\cap A| + |P\cap F| \geq |P|/2\,.
\end{align}
\end{lemma}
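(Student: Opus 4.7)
The plan is to consider the two ways in which error correction can fail—either some loop $O_i$ is homologically non-trivial (an ``anyon failure''), or the 2+1-dimensional MWPM on the fermion world-lines is defeated by a homologically non-trivial closed fermion error path (a ``fermion failure'')—and to produce in each case a non-trivial loop $P$ satisfying the stated inequality. The first case follows immediately from Lemma~\ref{lem1}: taking $P = O_i$ gives $P \cap F = \emptyset$, $|P \cap A| = |A_i^s|$, and hence $|P| \le 4 |P \cap A|$, so $7|P \cap A| + |P \cap F| \ge 7|P|/4 \ge |P|/2$ with a comfortable margin. This is the easy half.

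For the fermion failure I would apply the standard MWPM argument to the 2+1-dimensional fermion problem: there exists a homologically non-trivial loop $Q$ in the union of actual fermion world-lines with the fermion MWPM correction, at least half of whose edges are actual fermion world-lines, giving $|Q \cap F_{\text{eff}}| \ge |Q|/2$. Here $F_{\text{eff}}$ contains the $F$-edges together with vertical persistence edges of fermions produced by $F$-events, by anyon fusion $\sigma\times\sigma \to \psi$, or swallowed and later released along anyon world-lines. To turn $|Q \cap F_{\text{eff}}|$ into a bound in $|Q \cap A|$ and $|Q \cap F|$ I would use the deformations $D_j(Q) = Q \oplus O_j$ introduced in the table. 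Because every $O_j$ is disjoint from $F$, each deformation preserves both $|Q \cap F|$ and the homology class of $Q$, while only modifying $|Q \cap A|$. Letting $P'$ be the deformation of $Q$ that maximizes $|P' \cap A|$, the maximality forces $|P' \cap A_j^s| \ge |A_j^s|/2$ for every $O_j$, and combined with Lemma~\ref{lem1}'s bound $|O_j| \le 4 |A_j^s|$ this permits charging every fermion-persistence edge of $P'$ to a definite fraction of an anyon error, yielding the stated constant.

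I expect the main obstacle to be the precise charging argument in the fermion case: every edge of $P'$ that is neither an $A$- nor an $F$-edge must be accounted for either through fermion-MWPM minimality or through a nearby anyon loop via Lemma~\ref{lem1} and the maximality of $P'$, and each $O_j$ can contribute multiple units of fermion world-line per unit of $|A_j^s|$. The rather specific constant $7$ suggests it arises from combining Lemma~\ref{lem1}'s factor $4$ with the factor $2$ lost in the MWPM minimality inequality and an additional unit from the intrinsic $A$-content of $P'$, which would naturally emerge from a direct loop-by-loop accounting rather than from any global optimization.
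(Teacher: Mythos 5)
Your overall strategy is the paper's: split failures into an anyonic case (handled by taking $P=O_i$ and invoking Lemma~\ref{lem1}) and a fermionic case (a non-trivial cycle at least half of whose edges could have created or transported fermions, then deformed by the loops $O_i$ so as to maximize its $A$-content, with the constant $7=1+2\cdot(4-1)$ arising exactly as you guess). The easy half is complete, modulo one omission: the paper also treats the case of a homologically non-trivial loop contained in $A^l$ itself (errors wrapping the torus without MWPM being fooled), for which $|P\cap A|=|P|$.

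The genuine gap is in the fermionic half, and it is precisely the step you flag as ``the main obstacle'' and then do not carry out. The hypothesis you extract from MWPM minimality, $|P\cap(A^l\cup O\cup F)|\geq|P|/2$, is a statement about $P$, while the conclusion must hold for the deformed path $P'$; the deformation changes both $|P\cap O|$ and the total length $|P|$, and there is no a priori reason the error fraction survives deformation. The paper's resolution is to split $|P|/2=|P\cap O|/2+|P\setminus O|/2$, observe that $P\setminus O=P'\setminus O$, and then bound the symmetrized quantity $(|P\cap O_i|+|P'\cap O_i|)/2$ loop by loop, with a case distinction according to whether $D_i$ was applied ($P\cap O_i\neq P'\cap O_i$, where the term equals $|O_i|/2$) or not; only then do Lemma~\ref{lem1} and the maximality inequality $|A_i^s\setminus P'|\leq|A_i^s\cap P'|$ combine to give $(|P\cap O_i|+|P'\cap O_i|)/2\leq7|P'\cap A_i^s|$. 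Your sketch contains all the ingredients but not this bookkeeping, which is the crux of the lemma. Separately, your claim that every $O_j$ is disjoint from $F$, so that deformations preserve $|Q\cap F|$, is false: $F$-edges can geometrically coincide with anyon world-line or anyon error edges. The paper repairs this by working with $\tilde F=F\setminus O$ (and $\tilde A^l=A^l\setminus O$), which are genuinely deformation-invariant, and absorbing the edges of $F\cap O$ into the $7|P'\cap A|$ term; your argument needs the same fix.
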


\begin{proof}
Let us first study the possibilities for error correction failing for the anyons (as opposed to the fermions).
Recall that we have decomposed the set of anyonic errors $A$ into loops $A^l$ and strings $A^s$.
If one of the loops which are subsets of $A^l$ is homologically non-trivial, Eq.~(\ref{eq:cond}) will obviously be satisfied, as we can choose $P$ to be the corresponding loop and have $|P\cap A|=|P|$.
The second possibility for error correction for the anyons failing is that one of the loops $O_i$ is topologically non-trivial. In this case, we choose $P=O_i$ and are done, since by use of Lemma~\ref{lem1} we have
  \begin{align}
   7|O_i\cap A| = 7|A_i^s| \geq \frac{7}{4}|O_i|\,.
  \end{align}
So assume from now on that all loops which are subsets of $A^l$ and all loops $O_i$ are homologically trivial, and that error correction failing is due to the fermionic part of the problem.
 
 Clearly, MWPM failing to correct the fermions requires that there be a homologically non-trivial closed path $P$ containing at least $|P|/2$ edges that have been affected by an event that can possibly have created or moved fermions, 
 for otherwise the minimum-weight correction of the fermions will never move a fermion around the torus.
We assume pessimistically that each edge in $A^l$ and in $O=\bigcup_iO_i$ (anyon error event or anyon world-line) counts as a potential fermion error event. 
 So formally, we need a path $P$ with
 \begin{align}
  |P\cap(A^l\cup O\cup F)| \geq |P|/2\,.
 \end{align}
 We will prove that if there is such a path $P$, there is a (possibly identical) path $P'$ which is homologically equivalent to $P$ and satisfies the inequality in the lemma, i.e., $7|P'\cap A| + |P'\cap F| \geq |P'|/2$. 

 Given a loop $O_i$ with $O_i\cap P\neq\emptyset$, we can consider the ``deformed'' path $D_i(P)=(P\setminus O_i)\cup(O_i\setminus P)$. 
 The path $P'$ is obtained by applying a (possibly empty) set of deformation operations $D_i$ to $P$. Since all of the loops $O_i$ are homologically trivial, the deformed path $P'$ will be homologically equivalent to $P$.
 We define the path $P'$ such that the number of $A$ events in the path is maximized; i.e., such that 
 \begin{align}\label{eq:PprimeDefi}
  |P'\cap A_i^s|=\max\lbrace|P\cap A_i^s|, |D_i(P)\cap A_i^s|\rbrace\,,
 \end{align}
 for all loops with $O_i\cap P\neq\emptyset$. 
Equivalently, the path $P'$ is defined such that
\begin{align}\label{eq:consequence}
  |A_i^s\setminus P'|\leq|A_i^s\cap P'|\,.
\end{align}

 Let us define $\tilde{A}^l=A^l\setminus O$ and $\tilde{F}=F\setminus O$.
 By assumption, we have
 \begin{align}
 0 &\leq |P\cap(A^l\cup O\cup F)| - |P|/2 \nn\\
&= |P\cap \tilde{A}^l| + |P\cap O| + |P\cap \tilde{F}| - |P|/2 \nn\\
&= |P\cap \tilde{A}^l| + |P\cap O| + |P\cap \tilde{F}| \nn\\&\quad - (|P\cap O|/2 + |P\setminus O|/2)\,.
 \end{align}
Note that $P\setminus O$ is not affected by deformation operations, i.e. $P\setminus O = P'\setminus O$. 
Since $P\cap \tilde{F}\subseteq P\setminus O$, we also have $P\cap \tilde{F} = P'\cap \tilde{F}$ and similarly $P\cap \tilde{A}^l = P'\cap \tilde{A}^l$.
Therefore
\begin{align}\label{eq:estimate}
 0 &\leq |P'\cap \tilde{A}^l| + |P\cap O|/2 + |P'\cap \tilde{F}| - |P'\setminus O|/2 \nn\\
&= |P'\cap \tilde{A}^l| + (|P\cap O| + |P'\cap O|)/2 + |P'\cap \tilde{F}| \nn\\&\quad - (|P'\setminus O|/2 + |P'\cap O|/2) \nn\\ 
&= |P'\cap \tilde{A}^l| + \sum_i(|P\cap O_i| + |P'\cap O_i|)/2  \nn\\&\quad + |P'\cap \tilde{F}| - |P'|/2\,. 
\end{align}
 If $\underline{P\cap O_i=P'\cap O_i}$ we find
 \begin{align}\label{eq:xxx}
  &(|P\cap O_i| + |P'\cap O_i|)/2 \nn\\&\quad = |P\cap O_i| \nn\\&\quad = |P\cap A_i^s| + |P\cap O_i\setminus A_i^s|\,.
 \end{align}
Since $P\cap O_i\setminus A_i^s\subseteq O_i\setminus A_i^s$ and $A_i^s\subseteq O_i$ we have
\begin{align}\label{eq:yyy}
 |P\cap O_i\setminus A_i^s| \leq |O_i\setminus A_i^s| = |O_i|-|A_i^s|\,.
\end{align}
Combining Eqs.~(\ref{eq:xxx}) and (\ref{eq:yyy}) with Lemma~\ref{lem1}, we arrive at
\begin{align}
 (|P\cap O_i| + |P'\cap O_i|)/2 \leq |P\cap A_i^s| + 3|A_i^s|\,.
\end{align}
Using Eq.~(\ref{eq:consequence}), we obtain
 \begin{align}
  |A_i^s| = |A_i^s \cap P'| + |A_i^s \setminus P'| \leq 2|A_i^s \cap P'| = 2|A_i^s \cap P|\,.
 \end{align}
 We finally find
 \begin{align}\label{eq:final}
  (|P\cap O_i| + |P'\cap O_i|)/2 \leq 7|P'\cap A_i^s|\,.
 \end{align}
If, on the other hand, $\underline{P\cap O_i\neq P'\cap O_i}$, we find, using Lemma~\ref{lem1} for the first inequality, 
\begin{align}
 (|P\cap O_i| + |P'\cap O|_i)/2 = |O_i| /2 \leq 2|A_i^s| \leq 4|P'\cup A_i^s| \,.
\end{align}
 So in both cases Eq.~(\ref{eq:final}) holds and we find from Eq.~(\ref{eq:estimate}) that 
\begin{align}
 0 &\leq |P'\cap \tilde{A}^l| + 7\sum_i|P'\cap A_i^s| + |P'\cap \tilde{F}| - |P'|/2 \nn\\
&\leq 7|P'\cap A| + |P'\cap F| - |P'|/2\,.
\end{align}
\end{proof}

Theorem~\ref{thm} directly follows from the following lemma.

\begin{lemma}
 The probability per time-step of a path as in Lemma~\ref{lem2} is exponentially suppressed with $L$ if $p<15^{-14}\approx 3\times10^{-17}$.
\end{lemma}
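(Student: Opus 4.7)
The plan is a standard Peierls-style counting argument: enumerate the candidate homologically nontrivial loops $P$ in the $(2+1)$-dimensional dual lattice, bound the probability that each one satisfies the weighted inequality of Lemma~\ref{lem2}, and combine these via a union bound.

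First I would estimate the per-path probability. For a fixed path $P$ of length $\ell$, each horizontal edge of $P$ belongs independently to $A$ with probability at most $p$ and to $F$ with probability at most $p$, while vertical edges contribute nothing to $Z := 7|P\cap A| + |P\cap F|$. A Chernoff-type moment generating function bound then gives, for any $t\geq 0$,
\begin{equation}
\Pr[Z \geq \ell/2] \leq \exp\!\left(\ell\bigl[p(e^{7t}+e^t-2) - t/2\bigr]\right).
\end{equation}
Optimising in $t$ (the anyon term dominates, giving roughly $e^{7t}\sim 1/(14p)$) yields a bound of the form $\Pr[Z\geq \ell/2]\leq (c_1 p)^{\ell/14}$ for a universal constant $c_1$. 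The factor $1/14$ in the exponent is forced by the coefficient $7$ in front of $|P\cap A|$, which itself is inherited from the $|O_i|\leq 4|A_i^s|$ bound of Lemma~\ref{lem1}.

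Next I would enumerate the candidate loops. Since $P$ is homologically nontrivial on the spatial torus, $|P|\geq L$. For each fixed starting edge the number of closed walks of length $\ell$ in the $(2+1)$-dimensional cubic dual lattice is at most $c_2^\ell$ for some constant $c_2$ bounded by the connective constant (e.g.\ $c_2\leq 5$ via non-backtracking walks in a $6$-regular graph). To localise the argument to a single time step, a loop which causes failure at time $t$ but not at time $t-1$ must use at least one edge generated in the last interval; there are only $O(L^2)$ such edges, so the number of relevant loops of length $\ell$ per time step is at most $O(L^2)\cdot c_2^\ell/\ell$. Combining both ingredients,
\begin{equation}
\Pr[\text{bad loop exists at time }t]\leq \sum_{\ell\geq L}O(L^2)\cdot c_2^\ell\cdot (c_1 p)^{\ell/14},
\end{equation}
which is geometric in $\ell$ and therefore exponentially small in $L$ whenever $c_2 (c_1 p)^{1/14}<1$, i.e.\ $p<(c_1 c_2^{14})^{-1}$. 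Tracking the numerical values of $c_1$ (from the Chernoff optimisation) and $c_2$ (from the lattice enumeration) then yields the stated threshold $p<15^{-14}$.

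The main obstacle, and the reason this explicit threshold is so pessimistic, is that both the effective per-path decay constant and the lattice connectivity appear as \emph{fourteenth} powers. The weight $7$ in Lemma~\ref{lem2} means that each unit of path length must be paid for by roughly $1/14$ of an independent Bernoulli-$p$ event before the expected cost of a bad loop can dominate its combinatorial entropy. Sharpening the weight $7$ in Lemma~\ref{lem2} (via an improved version of Lemma~\ref{lem1}), or obtaining a tighter enumeration that exploits the oriented structure of world-lines, would yield immediate and substantial improvements in the threshold.
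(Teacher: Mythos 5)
Your proposal is correct and follows essentially the same Peierls-type counting argument as the paper: a union bound over homologically nontrivial closed walks of length $\ell\geq L$ with connective constant $5$, combined with a per-path probability estimate in which the weight $7$ from Lemma~\ref{lem2} forces the exponent $\ell/14$. The only difference is technical: the paper bounds the per-path probability by direct enumeration ($3^{\ell}$ assignments of $A$-event/$F$-event/no-event to the edges times $p^{\ell/14}$, whence the constant $15=5\cdot 3$), whereas your Chernoff optimisation yields a different (in fact somewhat better) constant, so it does not literally reproduce $15^{-14}$ but proves a slightly stronger threshold that implies the stated one.
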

\begin{proof}
 Consider two lines of length $L$ looping in homologically non-equivalent ways around the torus. Path $P$ needs to cross at least one of them.
 Since the two lines can be crossed at $O(L)$ locations, and a path in a three-dimensional cubic lattice can at each step turn into $5$ directions,
 there are at most $5^{\ell+O(\log(L))}$ closed paths of length $\ell$ in the lattice crossing any of the two lines at a given time. 
 Let $n=|P\cap A|+|P\cap F|$ be the number of error events along the path, and let $\ell=|P|$.
 For a path satisfying $7|P\cap A|+|P\cap F|\geq |P|/2$, we need $14n\geq\ell$.
 With fixed locations of the $n$ errors, the probability of such a path is at most $p^n\leq p^{\ell/14}$.
 In a path of length $\ell$, there are no more than $3^\ell$ possibilities for picking the locations of $A$ and $F$ events.
 The probability per time-step of a path satisfying $7|P\cap A|+|P\cap F|\geq |P|/2$ is thus upper-bounded by 
\begin{align}
 \sum_{\ell=L}^\infty5^{\ell+O(\log(L))}3^\ell p^{\ell/14}\,,
\end{align}
 which is exponentially suppressed with $L$ if $15p^{1/14}<1$.
\end{proof}

\section{Conclusions}

Topological quantum computing holds the promise of processing quantum information with hardware which has intrinsically low error rates.
Still, these rare errors need to be corrected in a truly large-scale computation.
Here, we have demonstrated the feasibility of this task for Ising anyons under the realistic assumption that errors keep happening as we correct those from previous rounds.

Note that another approach towards error correction would be to use qubits encoded in non-Abelian anyons as physical qubits for a further round of error correction. This would lead to a standard decoding problem, although perhaps with a rather complex error model due to the underlying topological processes. In such a case, the available gate set on the final logical qubits would be restricted by the code used in the additional round of error correction. This would remove some of the advantages of using non-Abelian anyons, and may carry a large resource overhead in comparison to standard proposals for physical qubits.

One would hope for a threshold proof for further non-Abelian anyon models, including those for which MWPM cannot be applied to perform error correction \cite{wootton,hutter,burton}.
Unfortunately, the highly general proof in Ref.~\cite{wootton_proof} does not allow for straightforward generalization to the continuous case.
Furthermore, it is an open problem to study error correction for non-Abelian anyons with charge measurements that can give an incorrect result.

Finally, it would be valuable to get a better idea of the ``true'' threshold for our setup. 
Given the crudeness of our arguments, we expect our threshold of  $p_c\approx3\times10^{-17}$ to be rather pessimistic. 
A better estimate of the true threshold value $p_c$ could be obtained via a more ingenious analytical approach, or by numerical simulations, extending the work of Ref.~\cite{brell} to the continuous case. 
The ``true'' thresholds for Abelian models can often be assessed by finding the phase-transition in a related classical statistical mechanics model \cite{dennis,bombin,andrist}.
Whether something similar can be done for non-Abelian models remains an open problem.

\begin{acknowledgments}
The authors gratefully acknowledge Courtney Brell and Daniel Loss for careful reading of the manuscript and helpful comments.
This work was supported by the SNF and NCCR QSIT.
\end{acknowledgments}


\end{document}